\documentclass[conference]{IEEEtran}

\usepackage{cite}
\usepackage{textcomp}
\usepackage{amsmath, amssymb, amsfonts}
\usepackage{accents}
\usepackage{color}
\usepackage{graphicx}
\usepackage{enumerate}
\usepackage{booktabs}
\usepackage{subcaption}
\usepackage{mathrsfs}
\usepackage{mathtools}
\usepackage{dsfont}
\usepackage{relsize}
\usepackage{cmtt}

\usepackage{algorithm}
\usepackage{algpseudocode}

\usepackage[dvipsnames]{xcolor}

\makeatletter
\let\NAT@parse\undefined
\makeatother
\usepackage{hyperref}  
\hypersetup{
    colorlinks= true,
    citecolor= black,
    linkcolor= black, 
    pdfborder= {0 0 0},
}

\DeclareMathOperator{\minimize}{minimize}

\DeclareMathOperator{\E}{\mathsf{E}}

\DeclareMathOperator{\sense}{\mathsf{sense}}
\DeclareMathOperator{\comm}{\mathsf{comm}}

\DeclareMathOperator{\fail}{\mathsf{fail}}
\DeclareMathOperator{\success}{\mathsf{succ}}

\newtheorem{lemma}{Lemma}

\newtheorem{theorem}{Theorem}

\newcounter{parnum}

\begin{document}

\title{Status Updating via Integrated Sensing and Communication: Freshness Optimisation}
\author{
Touraj Soleymani, Mohamad Assaad, and John S. Baras
\thanks{A preliminary version of this work has been submitted to the IEEE SPAWC 2026 conference. T.~Soleymani is with the University of London, United Kingdom (e-mail: {\tt\footnotesize touraj.soleymani@citystgeorges.ac.uk}). M.~Assaad is with the CentraleSup\'{e}lec, University of Paris-Saclay, France (e-mail: {\tt\footnotesize mohamad.assaad@centralesupelec.fr}). J.~S.~Baras is with the University of Maryland College Park, United States (e-mail: {\tt\footnotesize baras@umd.edu}).}%
}
\maketitle

\begin{abstract}
In this paper, we study how sensing and communication should be jointly coordinated in integrated sensing and communication (ISAC) systems to maintain timely situational awareness under reliability and resource constraints. We consider an ISAC-enabled base station that supports a remote source by dynamically choosing between sensing new state information and communicating previously acquired information, with the two operations semantically intertwined rather than serving separate targets and users. Both sensing and communication are unreliable and costly. The objective is to optimise a long-term cost that captures information freshness at the source, measured by the age of information (AoI), together with sensing and communication overheads. The resulting sequential decision problem is formulated as an infinite-horizon Markov decision process (MDP) with two-dimensional AoI states that capture information freshness at the source and at the base station. We prove that the optimal stationary policy admits a monotone threshold structure characterised by a nondecreasing switching curve in the AoI state space, and show that, as the base-station information becomes staler, the system increasingly favours sensing over communication. Our numerical analysis corroborates the theoretical findings.
\end{abstract}

\begin{IEEEkeywords}
Age of information, Markov decision processes, semantic communication, remote navigation.
\end{IEEEkeywords}

\section{Introduction}
Integrated sensing and communication (ISAC) has recently emerged as a foundational paradigm for next-generation wireless systems~\cite{Liu2022JSAC, liu2020joint, luong2020resource}. This integration is particularly promising for tasks built upon remote navigation, where an agent makes motion decisions based on externally acquired navigational information, rather than solely on its onboard perception. Examples include unmanned vehicles requiring external support for situational awareness and collision avoidance. In such systems, accurate navigation hinges on the timely delivery of relevant state information, which must be both sensed and communicated under stringent reliability and resource constraints. Consequently, performance in these systems is fundamentally constrained not only by communication latency, but also by the freshness of the information available at the decision maker. Even when latency is small, outdated state information can lead to unsafe or inefficient actions, especially in highly dynamic environments. This observation motivates the use of semantic metrics that explicitly capture the information freshness within the sensing-communication-control loop in the system-level design process. A simple yet powerful metric in this context is the age of information (AoI), defined as the elapsed time since the most recent state information was successfully acquired~\cite{kaul2012, yates2021age}.

Note that ISAC can be studied at several levels, including waveform-level integration through joint signalling design, resource-level integration through shared resource design, and decision-level integration through unified operational policy design. This paper mainly focuses on integration at the decision level, while abstracting the underlying physical interaction through sensing and communication reliability and resource parameters. More specifically, we consider an ISAC-enabled base station that supports a remote source by dynamically choosing between sensing new state information and communicating previously acquired information, with the two operations semantically intertwined rather than serving separate targets and users. Both sensing and communication are unreliable and costly. The objective is to optimise a long-term cost that captures information freshness at the source, measured by the AoI, together with sensing and communication overheads. We study a setting that captures the fundamental sensing-communication tradeoff and serves as a rigorous analytical foundation for understanding the dynamics of timely information acquisition.

\subsection{Related Work}
This work lies at the intersection of information freshness and ISAC. We review the related literature below.

\subsubsection{Semantics of Information} The AoI was introduced as a semantic metric for quantifying the timeliness of status updates in real-time systems~\cite{kaul2012, yates2021age, kosta2017, sun2022age}. Subsequent work has studied the AoI across different network settings, including random delay~\cite{sun2019}, energy harvesting~\cite{arafa2019age}, retransmissions~\cite{ceran2019average}, gossip networks~\cite{buyukates2022vers}, broadcast channels~\cite{kadota2018scheduling}, medium-access channels~\cite{chen2022age}, and massive-scale networks~\cite{liyan2025cyclic}. In these models, freshness at the receiver is governed by a single information-update mechanism. In contrast, our setting couples two distinct operations: sensing first refreshes the information stored at the base station, and communication subsequently delivers that information to the corresponding remote source. This separation gives rise to a two-dimensional freshness state and a fundamental tradeoff between sensing new state information and communicating previously acquired information. Beyond the AoI, richer notions of semantics have been considered in the literature~\cite{uysal2022semantic}, e.g., the age of incorrect information (AoII) and the value of information (VoI). The AoII was proposed as a semantic metric quantifying the time elapsed since the receiver's estimate was last correct~\cite{maatouk2020age, maatouk2022age, chen2024minimizing}. The VoI was introduced as a decision-theoretic measure quantifying the difference between the benefit and the cost of new measurements in networked control systems~\cite{mywodespaper, touraj-thesis, voi, voi2, soleymani2024found}, and has been analysed under packet loss and time delay~\cite{erasure2023, soleymani2024consis} and related to the AoI~\cite{soleymaniCUP}. However, frameworks based on the AoII and the VoI require an explicit rule for estimating the belief of each remote source, making the analysis dependent on the specific dynamics of the state process. Instead, we adopt the AoI, which abstracts away the underlying state process and therefore yields a more general framework applicable across time-sensitive systems.

\subsubsection{Integrated Sensing and Communication} ISAC has recently been identified by the International Telecommunication Union as one of the six usage scenarios for 6G~\cite{wp5d2023draft}. Its emergence is driven by spectrum scarcity, shared sensing-communication hardware, and the growing role of perceptive networks in which sensing is treated as a primary service~\cite{Liu2022JSAC, liu2020joint, luong2020resource}. Research on ISAC can be organised according to the level at which sensing and communication are integrated. At the waveform level, where integration is achieved by designing the dual-functional signal~\cite{mishra2019toward, hua2023mimo, xiong2023fundamental, sturm2011waveform, liu2021cramer}, the design variables include waveforms, transmit precoding and beamforming matrices, and spatial beampatterns, while the objective is to balance sensing signal-quality metrics against those for communication. At the resource level, where integration is achieved by partitioning a shared, finite resource budget between the two functions~\cite{xu2022robust, he2023full, shi2024beamforming, liao2024power, peng2024traj, liu2024uav_iot_isac}, the design variables include transmit power, subcarrier assignments, and antenna or subarray partitions, while the objective is to ensure feasibility and efficiency under shared-resource contention. A more task-oriented perspective is decision-level integration, which treats sensing and communication as actions to support downstream objectives such as monitoring, navigation, and control. However, much of the ISAC literature remains centred on traditional performance metrics such as detection probability, achievable rate, the Cram\'er--Rao bound (CRB), and signal-to-interference-plus-noise ratio (SINR). These metrics are fundamental for physical-layer design, but they do not directly capture the freshness requirements of time-sensitive closed-loop applications.

\subsubsection{Timely Sensing and Communication} Recent works have begun to study information freshness in ISAC systems~\cite{jia2026study, ge2026age, bai2025age, mei2025aoi, hu2026balancing, liu2025joint}. More specifically, Jia~\emph{et al.}~\cite{jia2026study} proposed a timeliness-centric ISAC framework in which sensing AoI and communication delay are jointly optimised through dynamic programming. Ge~\emph{et al.}~\cite{ge2026age} studied sensing-and-then-transmit ISAC networks and introduced the age of correct sensing, with a focus on optimising the sensing-communication power split. Bai \emph{et al.}~\cite{bai2025age} explored timeliness in drone-assisted ISAC systems by jointly optimising trajectory and beamforming using deep reinforcement learning. Other related studies have considered freshness minimisation under jamming attacks in air-ground ISAC networks~\cite{mei2025aoi}, freshness-energy tradeoffs in drone-assisted ISAC networks with mobile-edge computing~\cite{hu2026balancing}, and joint sensing and freshness optimisation in energy-constrained drone-assisted ISAC systems~\cite{liu2025joint}. In contrast to these works, we study ISAC systems where sensing and communication are semantically intertwined rather than serving separate targets and users. In existing AoI-aware ISAC formulations, sensing tasks generate information about sensing targets, while communication tasks deliver data to communication users; the two functions share physical resources but the sensed information is not itself the object that must later be communicated to a decision-making entity. In our model, by contrast, sensing and communication form two consecutive stages of the same information-update process. This semantic coupling induces a two-dimensional freshness state and a non-trivial sequential decision problem.

\subsection{Contributions and Outline}
In this paper, we study timely information acquisition in ISAC systems to enhance situational awareness of a remote source. The central question is how an ISAC-enabled base station should decide between sensing new state information and communicating previously acquired information, so as to maintain fresh situational awareness under reliability and resource constraints. We formulate this problem as an infinite-horizon Markov decision process (MDP) with two-dimensional AoI states that capture information freshness at the source and at the base station. We show that the optimal stationary policy admits a monotone threshold structure characterised by a nondecreasing switching curve in the AoI state space. This yields a clear separation between sensing and communication regions and admits the operational interpretation that, as the base-station information becomes staler, the system becomes increasingly inclined to sensing rather than communication. The remainder of the paper is organised as follows. Sections~\ref{sec:single-user-problem} and \ref{sec:single-user-results} formulate the problem and present the main results. Section~\ref{sec:numerical_results} provides numerical results. Finally, Section~\ref{sec:conclusion} concludes the paper and outlines directions for future research.

\begin{figure}[t]
\centering
  \includegraphics[width=1\linewidth]{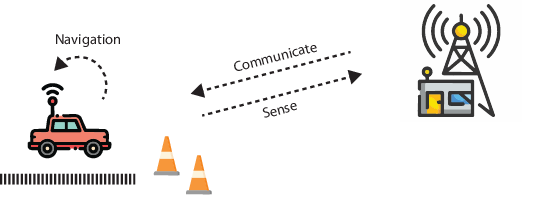}
  \caption{Illustration of an unmanned vehicle leveraging ISAC support from a base station to enhance situational awareness when onboard sensing is limited.}
  \label{fig:sketch}
\end{figure}

\section{Problem Statement}\label{sec:single-user-problem}
We consider a networked system composed of a remote source and an ISAC-enabled base station (see Fig.~\ref{fig:sketch} for a representative example). The remote source evolves over time, yet its limited onboard sensing may prevent it from directly observing its navigational state or obtaining a full overview of the surrounding environment. The base station performs remote sensing of the source state and communicates sensory information back to the source. The ISAC system at each time selects either sensing or communication mode. The goal is to minimise a staleness cost at the remote source together with the sensing and communication operational costs.

\subsection{System Model}
Let $Z_k$ denote the state of the source at time $k$ and $u_k \in \{\sense,\comm\}$ denote the ISAC action at time~$k$. Activating sensing or communication at each time incurs a fixed cost $c_s \ge 0$ or $c_c \ge 0$, respectively. When $u_k = \sense$, the ISAC system attempts to sense the current state of the source  $Z_k$, and obtains a measurement $X_k$. The sensing link is lossy and modelled as an i.i.d.\ Bernoulli process with success probability $\lambda_s \in (0,1]$. Accordingly, the measurement $X_k$ at time $k$ satisfies
\begin{align}
&\Pr(X_k = Z_k \mid Z_k, u_k = \sense) = \lambda_s,\\[1.0\jot]
&\Pr(X_k = \varnothing \mid Z_k, u_k = \sense) = 1 - \lambda_s,
\end{align}
where $\varnothing$ represents absence of data.
The base station stores the most recent successfully sensed state. Let $\tilde Z_k$ denote this stored state information at time~$k$. When $u_k = \comm$, the ISAC system transmits this information to the source. The communication link is also lossy and modelled as an i.i.d.\ Bernoulli process with success probability $\lambda_c \in (0,1]$. Let the channel input at time $k$ be $\tilde{Z}_{k}$. Accordingly, the channel output $Y_k$ at time~$k$~satisfies
\begin{align}
&\Pr(Y_k = \tilde{Z}_{k} \mid \tilde{Z}_{k}, u_k = \comm) = \lambda_c,\\[1.0\jot]
&\Pr(Y_k = \varnothing \mid \tilde{Z}_{k}, u_k = \comm) = 1 - \lambda_c,
\end{align}
where $\varnothing$ represents absence of data.

Let $\omega_k\in\{\success,\fail\}$ denote a link success indicator at time $k$. We can write
\begin{align}
\Pr(\omega_k = \success \mid u_k) =
\begin{cases}
\lambda_s, & \text{if } u_k=\sense,\\[1.0\jot]
\lambda_c, & \text{if } u_k=\comm,
\end{cases}
\end{align}
and $\Pr(\omega_k = \fail \mid u_k)=1-\Pr(\omega_k = \success \mid u_k)$. We assume that $\lambda_c \geq \lambda_s$, which captures the practical fact that communication links typically benefit from error-control mechanisms. We also assume that $c_c \geq c_s$, meaning that the cost of communication is at least equal to that of sensing. Let $\hat Z_k$ denote the state information available at the source at time $k$ following the ISAC action. The update rule for $\hat Z_k$ at the source is then given by
\begin{equation}
\hat Z_{k} =
\begin{cases}
\tilde{Z}_{k}, & \text{if } u_k = \comm \text{ and } \omega_k = \success,\\[1.0\jot]
\hat Z_{k-1}, & \text{otherwise}.
\end{cases}\label{eq:remote_update}
\end{equation}
This implies that the state information available at the source is updated only upon a successful communication action, while the source receives no data under a sensing action and hence its state information remains unchanged.

\subsection{Freshness Metric}
We define the age of information (AoI) at an entity, namely the source or the base station, as the elapsed time since the most recent state information available at that entity was successfully acquired. For $e \in \{s,b\}$, where $s$ and $b$ index the source and the base station, let $\alpha^e_{k}$ denote the AoI at time $k$ before the ISAC action, and $\alpha^e_{k^+}$ denote the AoI at time $k$ after the ISAC action. These variables quantify the freshness of the state information $\hat{Z}_k$ and $\tilde{Z}_k$. The evolution of the AoI variables at the source and the base station is governed by the outcomes of sensing and communication actions. In particular, under a successful sensing action at time $k$, the base station directly refreshes its local state information, while the source is not updated. Hence, we have $\alpha^s_{k^+} = \alpha^s_k$ and $\alpha^b_{k^+} = 0$. Moreover, under a successful communication action at time $k$, the base station conveys its local state information to the source, so that the two sides become aligned, while the base station itself is not updated. Hence, we have $\alpha^s_{k^+} = \alpha^b_k$ and $\alpha^b_{k^+} = \alpha^b_k$. However, if neither sensing nor communication action succeeds, no entity is updated. Therefore, we have $\alpha^e_{k^+} = \alpha^e_{k}$. Finally, between time instants $k^+$ and $k+1$, the AoI increases linearly with time, and we have $\alpha^e_{k+1} = \alpha^e_{k^+} + 1$.

\subsection{Decision-Making Problem}
We seek to find an optimal policy for the ISAC system. Throughout, we adopt the convention $u_k \in \{0,1\} \equiv \{\sense,\comm\}$ and $\omega_k \in \{0,1\} \equiv \{\fail,\success\}$. Let the state of the status-updating system at time $k$ be $S_k = (\alpha^s_k, \alpha^b_k)$. For simplicity of exposition, when the time index is immaterial, we use $i$ and $j$ instead of $\alpha^s$ and $\alpha^b$. Let $H_k := (S_0,u_0,\omega_0,S_1,u_1,\omega_1,\ldots,S_k)$ denote the history observed up to decision time \(k\). A policy for the ISAC system is a sequence $\pi := \{\mu_k\}$, where each decision rule \(\mu_k\) maps the history $H_k$ to an ISAC action $u_k$, i.e., $u_k = \mu_k(H_k)$. The class of all admissible policies is denoted by \(\mathcal P\). The stage cost measures the instantaneous penalty associated with stale information at the source, together with the resource cost of the selected action, and is defined as
\begin{align*}
g(S_k, u_k) = \alpha^s_k + c_s \mathds{1} \big\{ u_k=0 \big \} + c_c \mathds{1} \big \{ u_k=1 \big \}.
\end{align*}
We consider a long-run discounted-cost formulation to capture a balance between immediate and future costs in the system. In particular, we would like to solve
\begin{align}
\underset{\pi \in \mathcal{P}}{\minimize} \quad J(\pi) := \E^\pi \bigg[\sum_{k=0}^{\infty} \gamma^k\,g\big(S_k,u_k\big)\bigg], \label{eq:main_problem}
\end{align}
where $\gamma \in (0,1)$ is the discount factor. An optimal policy is denoted by $\pi^\star$.

\section{Main Results}\label{sec:single-user-results}
In this section, we present our main results. Note that the problem in (\ref{eq:main_problem}) can be cast as an infinite-horizon MDP with state $S = (\alpha^s, \alpha^b)$, which follows from the fact that the AoI pair constitutes a sufficient state variable for the status-updating system: the system evolution and the stage cost depend only on the current state and the current action, and not on the past history. Let $\mathcal{S}$ denote the state space and $\mathcal{U}$ denote the action space. As stated, at each time $k$, the action $u_k\in\{0,1\}$ determines whether a sensing or communication operation is attempted. Moreover, the link success indicator $\omega_k \in\{0,1\}$ specifies whether an operation is successful or failed. Depending on $u_k$ and $\omega_k$, the AoI pair evolves as
\begin{align*}
&(\alpha^s_{k+1}, \alpha^b_{k+1}) =
\begin{cases}
(\alpha^s_k + 1, 1), & \text{if } (u_k,\omega_k)=(0,1),\\[1.0\jot]
(\alpha^s_k + 1, \alpha^b_k + 1), & \text{if } (u_k,\omega_k)=(0,0),\\[1.0\jot]
(\alpha^b_k+1, \alpha^b_k + 1), & \text{if } (u_k,\omega_k)=(1,1),\\[1.0\jot]
(\alpha^s_k + 1, \alpha^b_k + 1), & \text{if } (u_k,\omega_k)=(1,0).
\end{cases}
\end{align*}
Note that all transitions are expressed at decision epochs $k+1$, i.e., after the post-action increment. The reachable state space is $\mathcal S:=\{(i,j)\in\mathbb N^2:1\le j\le i\}$. 

Let $F_{u,\omega}(S)$ be the transition mapping, giving the next state that results given action $u$ in state $S$ and outcome $\omega$. For any function $V : S \to \mathbb{R}$, define the action-value function under action $u \in \{\sense,\comm\}$ as
\begin{align}
  Q^V_u(S) := g(S, u) + \gamma \, \E\Big[ V\big(F_{u,\omega}(S)\big) \Big], \label{eq:def:Qu}
\end{align}
where the expectation is taken with respect to $\omega$. The Bellman operator can be written as
\begin{align}
TV(S) := \min_{u \in \{\sense,\comm\}} Q^V_u(S). \label{eq:def:BellmanOp}
\end{align}
The action-value functions can be written as
\begin{align*}
Q^V_{\sense}(i,j) &= i + c_s + \gamma \lambda_s V(i+1,1) \\[1.0\jot]
&\quad + \gamma(1-\lambda_s) V(i+1,j+1), \\[1.0\jot]
Q^V_{\comm}(i,j) &= i + c_c + \gamma \lambda_c V(j+1,j+1) \\[1.0\jot]
&\quad + \gamma(1-\lambda_c)V(i+1,j+1).
\end{align*}

Following the standard weighted sup-norm contraction result for discounted Markov decision processes with unbounded costs, one can establish well-posedness of the discounted Bellman equation that guarantees existence and uniqueness of the value function. Furthermore, one can show that the value function satisfies a marginal dominance property, i.e.,
\begin{align*}
V^{\star}(i,j+1)-V^{\star}(i,j) \le V^{\star}(j+1,j+1)-V^{\star}(j,j),
\end{align*}
for all $i \geq j+1$. Given these observations, we next characterise single-crossing properties.
\begin{lemma}\label{lem:single-crossing}
Define the difference function $\Delta(i,j):=Q_{\sense}^{V^{\star}}(i,j)-Q_{\comm}^{V^{\star}}(i,j)$.
Then:
\begin{enumerate}
\item For every fixed $j$, $i\mapsto \Delta(i,j)$ is nondecreasing, and $\{i\ge j:\Delta(i,j)\le 0\}$ is a lower set in $i$.
\item For every fixed $i$, $j\mapsto \Delta(i,j)$ is nonincreasing, and $\{j\le i:\Delta(i,j)\le 0\}$ is an upper set in $j$.
\end{enumerate}
\end{lemma}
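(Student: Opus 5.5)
The plan is to write $\Delta$ explicitly, take first differences in each coordinate, and sign them using monotonicity of $V^\star$ and the marginal dominance property stated above. From the expressions for $Q^V_{\sense}$ and $Q^V_{\comm}$, the terms $i$ cancel and
\begin{align*}
\Delta(i,j) &= c_s - c_c + \gamma\lambda_s V^\star(i+1,1) - \gamma\lambda_c V^\star(j+1,j+1) \\
&\quad + \gamma(\lambda_c-\lambda_s)V^\star(i+1,j+1).
\end{align*}
Everything rests on the sign of $\lambda_c-\lambda_s\ge 0$ and on the sign of the increments of $V^\star$.

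\textbf{Preliminary step: monotonicity of $V^\star$.} I would first show that $V^\star$ is nondecreasing in each coordinate on $\mathcal S$. Concretely, this means $V^\star(i+1,j)\ge V^\star(i,j)$ and, whenever $j+1\le i$, $V^\star(i,j+1)\ge V^\star(i,j)$. It then follows that $V^\star$ is nondecreasing along the diagonal, since
\begin{align*}
V^\star(j+1,j+1)\ge V^\star(j+1,j)\ge V^\star(j,j).
\end{align*}

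The argument is by value iteration, starting from $V_0\equiv 0$ and using the weighted sup-norm contraction to pass to the limit. One checks that $T$ preserves the class of functions that are componentwise nondecreasing on $\mathcal S$.
\begin{enumerate}
\item In $i$: the stage cost increases in $i$, and both $V(i+1,1)$ and $V(i+1,j+1)$ are nondecreasing in $i$.
\item In $j$, with $j+1\le i$: $Q^V_{\sense}$ depends on $j$ only through $V(i+1,j+1)$. $Q^V_{\comm}$ depends on $j$ through $V(i+1,j+1)$ and the diagonal value $V(j+1,j+1)$, and diagonal monotonicity follows from componentwise monotonicity as shown above.
\end{enumerate}
Since a minimum of nondecreasing functions is nondecreasing, the class is preserved. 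The class is also closed under pointwise limits, so $V^\star$ inherits these properties. I expect this to be the main, though routine, obstacle. The care needed is in respecting the constraint $j\le i$, so that every compared state stays in $\mathcal S$.

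\textbf{Part 1.} For fixed $j$,
\begin{align*}
\Delta(i+1,j)-\Delta(i,j) &= \gamma\lambda_s\big[V^\star(i+2,1)-V^\star(i+1,1)\big] \\
&\quad + \gamma(\lambda_c-\lambda_s)\big[V^\star(i+2,j+1)-V^\star(i+1,j+1)\big].
\end{align*}
Both brackets are nonnegative by monotonicity in $i$, so the increment is nonnegative. Hence $i\mapsto\Delta(i,j)$ is nondecreasing, and its sublevel set $\{\Delta\le 0\}$ is a lower set in $i$.

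\textbf{Part 2.} For fixed $i$ and $j+1\le i$, set $D:=V^\star(j+2,j+2)-V^\star(j+1,j+1)$. Then
\begin{align*}
\Delta(i,j+1)-\Delta(i,j) &= -\gamma\lambda_c D \\
&\quad + \gamma(\lambda_c-\lambda_s)\big[V^\star(i+1,j+2)-V^\star(i+1,j+1)\big].
\end{align*}
Apply the marginal dominance property at the point $(i+1,j+1)$. This is legitimate because $i+1\ge j+2$, and it gives
\begin{align*}
V^\star(i+1,j+2)-V^\star(i+1,j+1)\le D.
\end{align*}
Combined with $\lambda_c-\lambda_s\ge 0$, the increment is at most
\begin{align*}
\gamma\big[(\lambda_c-\lambda_s)-\lambda_c\big]D=-\gamma\lambda_s D.
\end{align*}
This is $\le 0$ because $D\ge 0$ by diagonal monotonicity of $V^\star$. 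Therefore $j\mapsto\Delta(i,j)$ is nonincreasing on $\{j\le i\}$. If $\Delta(i,j)\le 0$ and $j\le j'\le i$, then $\Delta(i,j')\le 0$, so $\{j\le i:\Delta(i,j)\le 0\}$ is an upper set in $j$.
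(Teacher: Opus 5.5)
Your proposal is correct and follows the paper's proof essentially step for step. It uses the same expansion of $\Delta$, the same first differences in $i$ and $j$, and the same application of marginal dominance at $(i+1,j+1)$ together with $\lambda_c\ge\lambda_s$ to bound the $j$-increment by $-\gamma\lambda_s D$. The only differences are that you supply a value-iteration argument for the coordinatewise monotonicity of $V^\star$, which the paper simply asserts, and that you need only $D\ge 0$ from diagonal monotonicity, where the paper invokes the stronger (true but unnecessary) bound $D\ge 1$.
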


\begin{IEEEproof}
Using the definitions of \(Q_{\sense}^{V^{\star}}\) and \(Q_{\comm}^{V^{\star}}\), the difference function is obtained as
\begin{align}
\Delta(i,j) &= (c_s-c_c)\!+\!\gamma\lambda_s V^{\star}(i+1,1)\!-\!\gamma\lambda_c V^{\star}(j+1,j+1) \notag\\[1.0\jot]
&\quad  +\gamma(\lambda_c-\lambda_s)V^{\star}(i+1,j+1). \label{eq:delta-expanded}
\end{align}

\noindent \textit{Step 1: Monotonicity in $i$.} Fix $j \geq 1$. We show that $\Delta(i+1,j) \geq \Delta(i,j)$ for any $i \geq j$. Using \eqref{eq:delta-expanded}, we obtain
\begin{align*}
&\Delta(i+1,j)-\Delta(i,j)\\[1.0\jot]
&= \gamma\lambda_s \big[ V^{\star}(i+2,1)-V^{\star}(i+1,1) \big] \notag\\[1.0\jot]
&\quad +\gamma(\lambda_c-\lambda_s) \big[ V^{\star}(i+2,j+1)-V^{\star}(i+1,j+1) \big].
\end{align*}
Since $V^{\star}$ is coordinatewise nondecreasing, we have
\begin{align*}
V^{\star}(i+2,1)-V^{\star}(i+1,1)\ge0,\\[1.0\jot]
V^{\star}(i+2,j+1)-V^{\star}(i+1,j+1)\ge0.
\end{align*}
As $\gamma > 0$, $\lambda_s>0$, and $\lambda_c-\lambda_s \geq 0$, we conclude that $\Delta(i+1,j)-\Delta(i,j)\ge0$. This implies that, for every fixed $j$, the mapping $i\mapsto \Delta(i,j)$ is nondecreasing.

\noindent \textit{Step 2: Lower-set property in $i$.} Fix $j \geq 1$. Define the sensing region $\mathcal S_j:=\{i\ge j:\Delta(i,j) \leq 0\}$. Let $i_2\in \mathcal S_j$, and suppose $j \leq i_1 \leq i_2$. Since $i\mapsto \Delta(i,j)$ is nondecreasing, we find that $\Delta(i_1,j)\le \Delta(i_2,j)\le0$. Hence, $i_1\in\mathcal S_j$ and $\mathcal S_j$ is a lower~set.

\noindent \textit{Step 3: Monotonicity in \(j\).} Fix $i \geq 1$. We show that $\Delta(i,j+1)-\Delta(i,j) \leq 0$ for any $j \leq i-1$. Using \eqref{eq:delta-expanded}, we obtain
\begin{align*}
&\Delta(i,j+1)-\Delta(i,j) \\[1.0\jot]
&= -\gamma\lambda_c \big[ V^{\star}(j+2,j+2)-V^{\star}(j+1,j+1) \big] \\[1.0\jot]
&\quad +\gamma(\lambda_c-\lambda_s) \big[ V^{\star}(i+1,j+2)-V^{\star}(i+1,j+1) \big]\\[1.0\jot]
&= -\gamma\lambda_c D_d V^{\star}(j+1) + \gamma(\lambda_c-\lambda_s)D_vV^{\star}(i+1,j+1),
\end{align*}
where
\begin{align*}
D_d V^{\star}(j) &:= V^{\star}(j+1,j+1) - V^{\star}(j,j),\\[1.0\jot]
D_v V^{\star}(i,j) &:= V^{\star}(i,j+1) - V^{\star}(i,j).
\end{align*}
By the marginal dominance property, as $j \leq i-1$, we have
\[
D_v V^{\star}(i+1,j+1)\le D_d V^{\star}(j+1).
\]
As $\gamma>0$ and $\lambda_c-\lambda_s\ge0$, we obtain
\begin{align}
&\Delta(i,j+1)-\Delta(i,j) \notag\\[1.0\jot]
&\le -\gamma\lambda_c D_d V^{\star}(j+1) +\gamma(\lambda_c-\lambda_s)D_d V^{\star}(j+1) \notag\\[1.0\jot]
&= -\gamma\lambda_s D_d V^{\star}(j+1).
\end{align}
Since the stage cost includes the term \(\alpha_k^s\), increasing the diagonal state from \((j+1,j+1)\) to \((j+2,j+2)\) increases the immediate cost by at least one, and therefore $D_dV^\star(j+1) := V^\star(j+2,j+2)-V^\star(j+1,j+1) \ge 1$. As $\lambda_s>0$, we conclude that $\Delta(i,j+1)-\Delta(i,j)\le0$. This implies that, for every fixed $i$, the mapping $j\mapsto \Delta(i,j)$ is nonincreasing.

\noindent \textit{Step 4: Upper-set property in \(j\).} Fix $i \geq 1$. Define the sensing region $\mathcal S_i:=\{j\le i:\Delta(i,j) \leq 0\}$. Let \(j_1\in\mathcal S_i\), and suppose \(j_1\le j_2\le i\). Since \(j\mapsto \Delta(i,j)\) is nonincreasing, $\Delta(i,j_2)\le \Delta(i,j_1)\le0$. Hence, $j_2\in\mathcal S_i$, and $\mathcal S_i$ is an upper set. This completes the proof of the lemma.
\end{IEEEproof}

We now use the single-crossing properties to prove the threshold structure of the optimal policy.
\begin{theorem}\label{thm:threshold_policy}
For every fixed $j$, there exists $\tau(j)\in \mathbb N\cup\{+\infty\}$ such that
\begin{align*}
\pi^{\star}(i,j) =
\begin{cases}
\mathrm{sense}, & i \le \tau(j),\\[1.0\jot]
\mathrm{comm}, & i > \tau(j).
\end{cases}
\end{align*}
Moreover, the threshold \(\tau(j)\) is nondecreasing in \(j\).
\end{theorem}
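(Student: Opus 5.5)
The plan is to read off both claims of Theorem~\ref{thm:threshold_policy} from Lemma~\ref{lem:single-crossing}, with the optimal stationary policy taken to be the greedy policy with respect to $V^\star$ and ties broken in favour of sensing:
\begin{align*}
\pi^\star(i,j)=\sense \iff \Delta(i,j)\le 0 .
\end{align*}
Because the Bellman equation is well posed, $V^\star=TV^\star$ holds, and any selector of the minimiser in \eqref{eq:def:BellmanOp} is optimal. The tie-breaking rule therefore costs nothing in optimality. It matters only because it turns the set where sensing is chosen into exactly the sublevel set $\{\Delta\le 0\}$ that the lemma describes.

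\textbf{Existence of a threshold.} Fix $j\ge1$ and let $\mathcal S_j=\{i\ge j:\Delta(i,j)\le 0\}$. I define
\begin{align*}
\tau(j):=\sup\mathcal S_j,
\end{align*}
with the conventions $\tau(j)=+\infty$ when $\mathcal S_j$ is unbounded, and $\tau(j)=j-1$ when $\mathcal S_j=\emptyset$. The value $j-1$ lies in $\mathbb N$ (or is $0$ when $j=1$), and with it every reachable $i\ge j$ is assigned communication. By part~1 of Lemma~\ref{lem:single-crossing}, $\mathcal S_j$ is a lower set in $i$, so it is exactly $\{j,\dots,\tau(j)\}$, or all of $\{i\ge j\}$ when $\tau(j)=+\infty$. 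Hence $\pi^\star(i,j)=\sense$ for $j\le i\le\tau(j)$ and $\pi^\star(i,j)=\comm$ for $i>\tau(j)$, which is the displayed form.

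\textbf{Monotonicity of $\tau$.} I need $\tau(j)\le\tau(j+1)$. The idea is to show that every $i\in\mathcal S_j$ with $i\ge j+1$ also belongs to $\mathcal S_{j+1}$. Part~2 of Lemma~\ref{lem:single-crossing} gives, for fixed $i\ge j+1$, that $j\mapsto\Delta(i,j)$ is nonincreasing, so $\Delta(i,j+1)\le\Delta(i,j)\le 0$. If $\tau(j)\ge j+1$, applying this at $i=\tau(j)$ (or along an unbounded sequence in $\mathcal S_j$ when $\tau(j)=\infty$) gives $\tau(j+1)\ge\tau(j)$.

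\textbf{Main obstacle: the boundary of the reachable region.} The remaining cases come from the constraint $j\le i$, and I expect this bookkeeping to be the delicate part.
\begin{itemize}
\item If $\mathcal S_j=\emptyset$, then $\tau(j)=j-1$. Since $\tau(j+1)\ge j$ by its own convention, monotonicity holds.
\item If $\tau(j)=j$, then $\mathcal S_j=\{j\}$, and I need $\tau(j+1)\ge j$. This again holds because $\tau(j+1)\ge j$ under the convention.
\end{itemize}
Thus monotonicity holds in every case. The choice of the convention $\tau(j)=j-1$ for empty sets is what makes these edge cases work; the rest is a direct consequence of the lemma.
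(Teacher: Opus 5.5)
Your proof is correct and follows the paper's argument: you break ties toward sensing, set $\tau(j)=\sup\mathcal S_j$ using the lower-set property, and get monotonicity from the inclusion $\mathcal S_j\cap\{i\ge j+1\}\subseteq\mathcal S_{j+1}$ supplied by part~2 of Lemma~\ref{lem:single-crossing}. The only difference is at the boundary: the paper uses diagonal sensing, $\Delta(j,j)=c_s-c_c+\gamma\lambda_s[V^\star(j+1,1)-V^\star(j+1,j+1)]\le 0$, to get $\mathcal S_j\neq\emptyset$ and $\tau(j)\ge j$, whereas you handle possibly empty $\mathcal S_j$ with the convention $\tau(j)=j-1$ (note that this convention would give $\tau(1)=0\notin\mathbb N$ if $\mathbb N$ starts at $1$, a case that diagonal sensing rules out).
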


\begin{IEEEproof}
Consider a tie-breaking convention such that sensing is selected when $Q_{\sense}^{V^{\star}}=Q_{\comm}^{V^{\star}}$. Under this convention, sensing is optimal if and only if $\Delta(i,j)\le0$.

\noindent \textit{Step 1: Threshold structure of the optimal policy.} Fix \(j\ge1\). Consider the sensing region $\mathcal S_j = \{i\ge j: \Delta(i,j)\le0\}$. By Lemma~\ref{lem:single-crossing}, \(\mathcal S_j\) is a lower set in \(\{i\ge j\}\). That is, if \(i_2\in\mathcal S_j\) and \(j\le i_1\le i_2\), then $\Delta(i_1,j)\le \Delta(i_2,j)\le0$. Under \(c_c\ge c_s\), sensing is optimal on every diagonal state \((j,j)\). Moreover, by diagonal sensing, $\Delta(j,j)\le0$. Thus \(j\in\mathcal S_j\), so \(\mathcal S_j\) is nonempty. Define $\tau(j):=\sup \mathcal S_j$. Since \(\mathcal S_j\) is a nonempty lower set of integers, its supremum is either a largest integer in \(\mathcal S_j\) or \(+\infty\). Therefore, the sensing region can be expressed as $\mathcal S_j=\{i\ge j: i\le\tau(j)\}$. Consequently, $\pi^{\star}(i,j)=\mathrm{sense}$ if $i\le \tau(j)$, and $\pi^{\star}(i,j)=\mathrm{comm}$ otherwise. The case $\tau(j)=+\infty$ corresponds to sensing being optimal for all states in row~$j$.

\noindent \textit{Step 2: Monotonicity of threshold in $j$.} It remains to prove that \(\tau(j)\) is nondecreasing in \(j\). Fix \(j_1\le j_2\). We first prove the inclusion $\mathcal S_{j_1}\cap\{i\ge j_2\} \subseteq \mathcal S_{j_2}$. Indeed, let \(i\in\mathcal S_{j_1}\cap\{i\ge j_2\}\). Then, $\Delta(i,j_1)\le0$. By Lemma~\ref{lem:single-crossing}, \(j\mapsto\Delta(i,j)\) is nonincreasing in $j$. Since \(j_1\le j_2\), we have $\Delta(i,j_2)\le \Delta(i,j_1)\le0$. Hence, \(i\in\mathcal S_{j_2}\), proving the inclusion. Now, suppose that $\tau(j_2)<\tau(j_1)$. Because \(\mathcal S_{j_1}\) is a lower set of integers with supremum \(\tau(j_1)\), there exists $i\in\mathcal S_{j_1}$ such that $i>\tau(j_2)$. Furthermore, diagonal sensing gives \(j_2\in\mathcal S_{j_2}\), and therefore $\tau(j_2)\ge j_2$. Thus, $i>\tau(j_2)\ge j_2$, so $i\ge j_2$. Hence, $i\in\mathcal S_{j_1}\cap\{i\ge j_2\}$. By the inclusion proved above, $i\in\mathcal S_{j_2}$. But this contradicts \(i>\tau(j_2)\), since no element of \(\mathcal S_{j_2}\) can exceed its supremum. Therefore, we conclude that $\tau(j_1)\le\tau(j_2)$ and the threshold \(\tau(j)\) is nondecreasing in \(j\).
\end{IEEEproof}

\section{Numerical Analysis}\label{sec:numerical_results}
In this section, we numerically examine our analytical results. The numerical experiments illustrate the value-function structure and the optimal threshold policy. In our simulations, the state space is $\mathcal S_A$, and value iteration is run until the Bellman residual falls below a prescribed tolerance. The truncation level is chosen sufficiently large so that the switching boundary is unaffected by boundary effects. Unless otherwise stated, the following baseline parameters are used in the simulations: $A_{\max}=30$, $\gamma=0.95$, $\lambda_s=0.30$, $\lambda_c=0.35$, $c_s=0.10$, and $c_c=0.20$. We examine how sensing reliability, communication reliability, and action costs shape the optimal policy. Fig.~\ref{fig:value_function} shows the value function $V^\star(\alpha^s,\alpha^b)$ obtained from value iteration. The value function is coordinatewise nondecreasing, in agreement with the analytical structure of the value function. The dependence on $\alpha^s$ is visibly steeper than on $\alpha^b$, as the stage cost penalises the source AoI directly while the base-station AoI enters the cost only indirectly through future transitions. Fig.~\ref{fig:decision_map} shows the corresponding optimal policy. The sensing and communication regions are separated by a clear switching boundary: for each fixed $\alpha^b$, sensing is optimal when $\alpha^s$ lies below a threshold, and communication becomes optimal once the threshold is crossed. The threshold is nondecreasing in $\alpha^b$, confirming the analytical structure of the threshold: as the base-station information becomes staler, the system tolerates larger source AoI before switching to communication.

\begin{figure}[t!]
\centering
\includegraphics[width=0.85\linewidth, trim={0mm 0mm 0mm 5mm},clip]{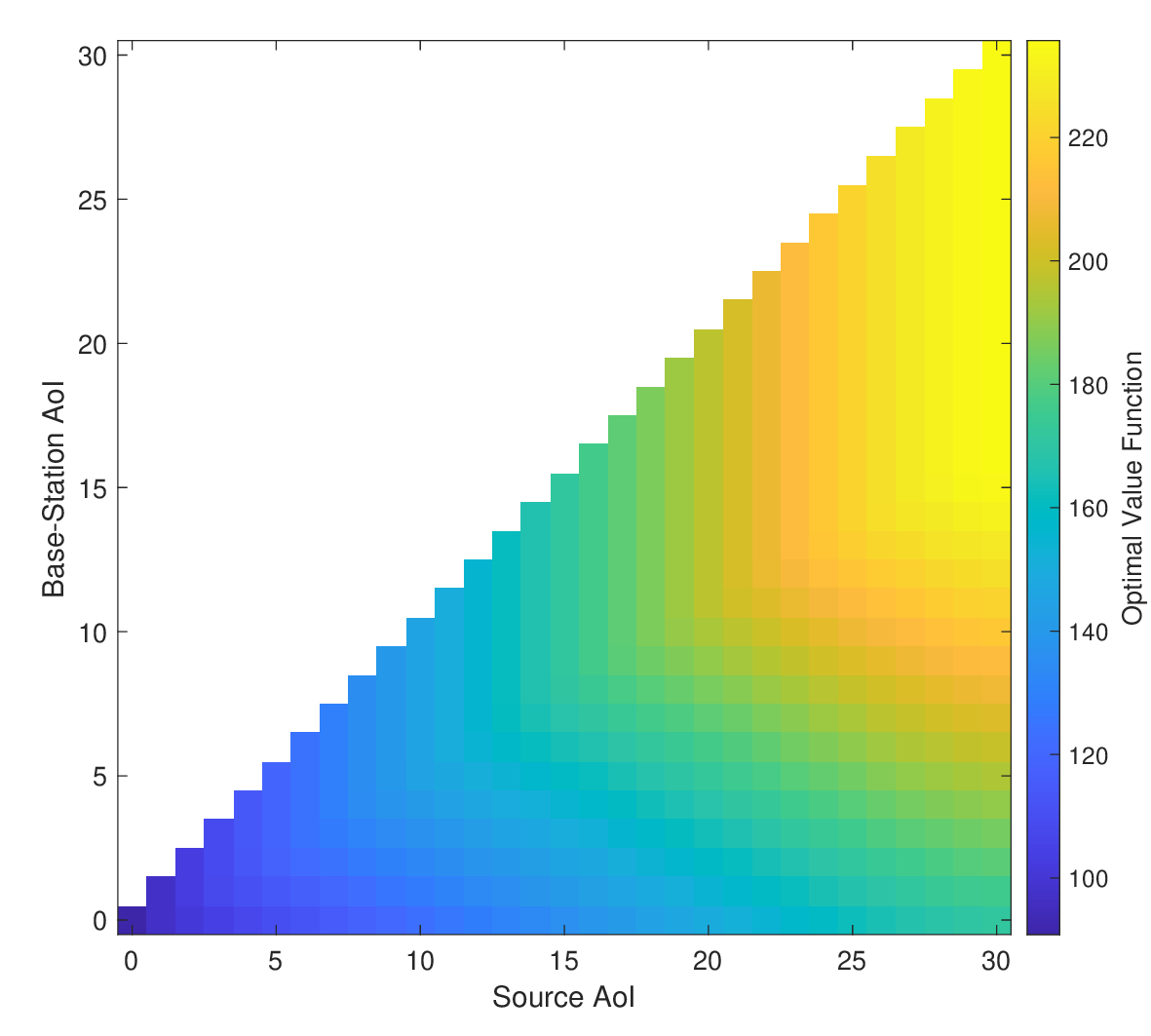}
\caption{Optimal value function, showing how the expected cost varies with the source and base-station AoI values.}
\label{fig:value_function}
\end{figure}

\begin{figure}[t!]
\centering
\includegraphics[width=0.85\linewidth, trim={0mm 0mm 0mm 5mm},clip]{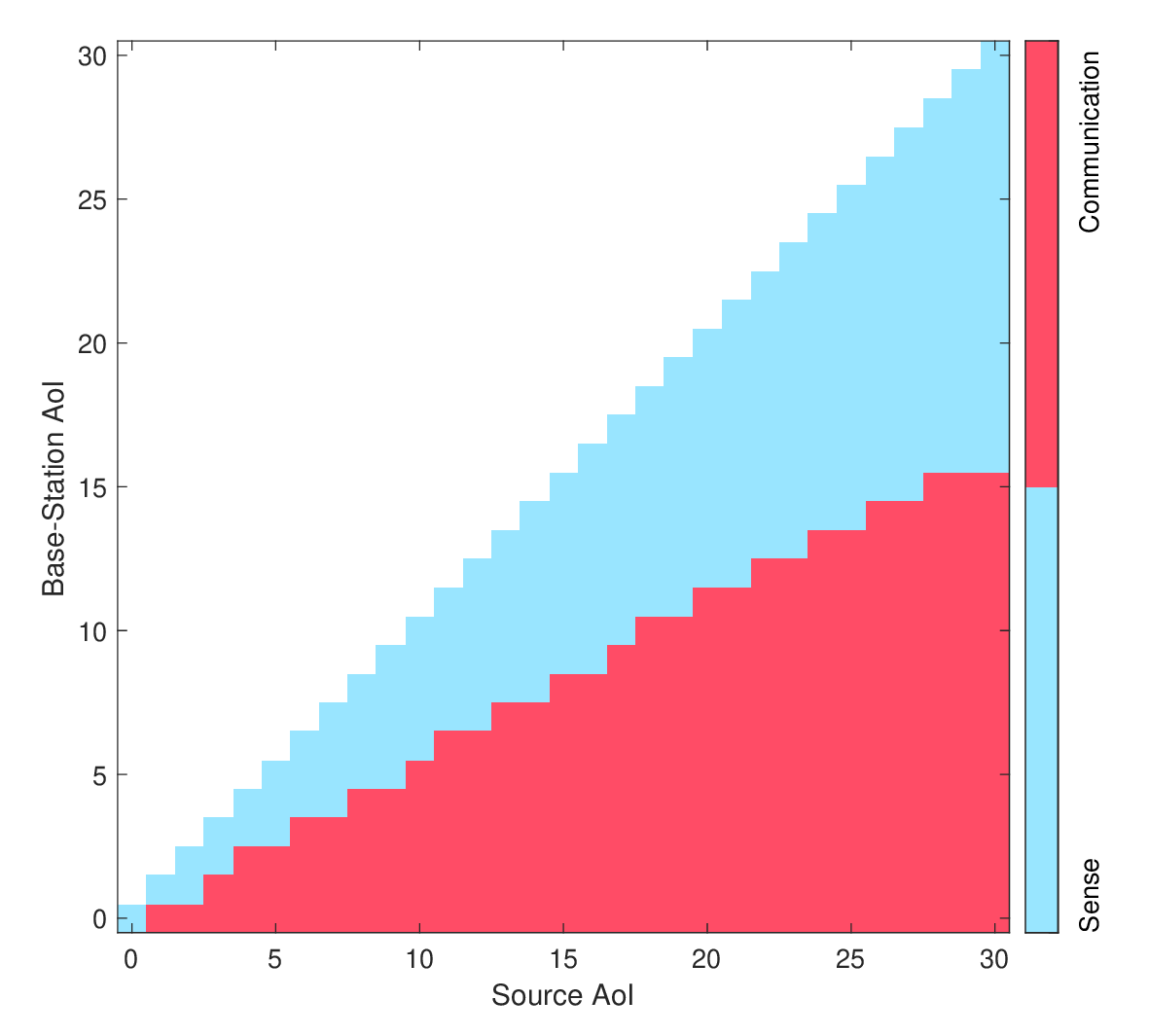}
\caption{Optimal decision map, illustrating the sensing and communication regions induced by the optimal threshold policy.}
\label{fig:decision_map}
\end{figure}

\section{Conclusions}\label{sec:conclusion}
In this paper, we studied timely information acquisition in ISAC systems to enhance situational awareness of a remote source. We showed that the optimal discounted policy admits a monotone threshold structure. This result was obtained based on a marginal-dominance property that yields the required single-crossing behaviour. The numerical results validated the analytical results. Future work may extend the proposed framework by incorporating explicit source dynamics, explicit control performance criteria, and joint physical-layer optimisation through waveform synthesis, signalling schemes, beamforming, power control, and interference management.

\bibliography{../../../mybib}
\bibliographystyle{ieeetr}

\end{document}